\journal{Communications in Nonlinear Science and Numerical Simulation}
\begin{document}

\newtheorem{conjectura}{Conjecture}
\newtheorem{definition}{Definition}
\newtheorem{lema}{Lema}
\newtheorem{theorem}{Theorem}
\newtheorem{proposition}{Proposition}

\begin{frontmatter}

\title{On the mKdV-Liouville hierarchy and its self-similarity reduction}

\author{Danilo V. Ruy}

\address{Instituto de F\'isica Te\'orica-UNESP, Rua Dr Bento Teobaldo Ferraz 271, Bloco II, 01140-070, S\~ao Paulo, Brazil}
\ead{daniloruy@ift.unesp.br}

\begin{abstract}
Integrable mixed models have been used as a generalization of traditional integrable models. However, a map from a traditional integrable model to a mixed integrable model is not well understood yet. Here, it is studied the relation between the mKdV-Liouville hierarchy and the mKdV hierarchy by employing an extended version of the modified truncation approach. This paper shows some solutions for the mKdV-Liouville hierarchy constructed from the soliton solutions of the mKdV hierarchy. The last section deals with the possibility of define new transcendental functions from the self-similarity reduction of the mKdV-Liouville hierarchy.

\end{abstract}

\begin{keyword}
mKdV-sinh-Gordon \sep mKdV-Liouville \sep mKdV \sep modified truncation approach \sep Painlevé \sep transcendental function

\MSC[2010]  	37K10 \sep 37K35 \sep 37K40 



\end{keyword}

\end{frontmatter}


\section{Introduction}

The mKdV equation has been widely studied in the last decades and possess some well know solutions \cite{Lamb,Chen,Mosavian,Franca}. A generalization of the mKdV equation which combines the mKdV and the sinh-Gordon equations was proposed in \cite{konno1} as a mixed integrable model. Later, this mixed model showed to be suitable for describing few-optical-cycles pulse in transparent media \cite{Leblon}. The generalized mKdV-sinh-Gordon hierarchy \cite{Ruy} is an mixed model which include the mixed mKdV-sinh-Gordon ($\delta(t)=-\beta(t)$) and the mixed mKdV-Liouville hierarchies ($\delta(t)=0$), namely 
\begin{equation}\label{mKdV-SG}
    \epsilon_0(t){\partial\over\partial x}\biggl({\partial\over\partial x}+y_x\biggr){\cal L}_{n}\biggl[x;y_{xx}-{1\over2}y_x^2\biggr]+y_{xt}+\beta(t)e^y+\delta(t)e^{-y}=0,
\end{equation}
where ${\cal L}_{n}[x;u]$ is the Lenard recurrence relation, i.e.
\[
{\partial\over\partial x}{\cal L}_{n+1}[x;u]=\biggl({\partial^3\over\partial x^3}+2u{\partial\over\partial x}+u_x\biggr){\cal L}_{n}[x;u],  \hspace{1 cm}   {\cal L}_{0}[x;u]=1,
\]
such that the first equation of the hierarchy is
\begin{equation}\label{mKdV-SG eq1}
 \epsilon_0(t)\biggl(y_{xxxx}-{3\over2}y_x^2y_{xx}\biggr)+y_{xt}+\beta(t)e^y+\delta(t)e^{-y}=0
\end{equation}


It was shown in \cite{Ruy} that the self-similarity reduction of (\ref{mKdV-SG eq1}) yields Kudryashov's equation \cite{kudryashov}. Such equation passes the necessary condition for absence of movable branches points, called Painlev\'e property, and it can reduce to two Painlev\'e equation by appropriate choices of the parameters. The Painlev\'e equations are second-order nonlinear ODEs (ordinary differential equations) which define new transcendental functions \cite{Umemura} and it has motivated many studies in higher order ODEs \cite{Ruy,kudryashov,cosgrove1,cosgrove2,gordoa,k1,k2,mugan1,mugan2,sakka,sakka2,sakka3,k3}. Due the connection with two Painlev\'e equations, Kudryashov's equation was conjectured as a possibility of defining a new transcendental function \cite{kudryashov,kudryashov-Sinelshchikov}. The representation of the solutions of (\ref{mKdV-SG}) and its self-similarity reduction in terms of a simpler hierarchy is not well understood yet. In this paper, we study this relation in the particular case of the mKdV-Liouville hierarchy.

The modified truncation approach was introduced in \cite{Clarkson} as a technique to find auto-B\"acklund transformations for an hierarchy of ODEs. Here, we employ an extended version of this approach in order to obtain one transformation from soliton solutions for the mKdV hierarchy to the mKdV-Liouville hierarchy. Besides, we obtain a transformation from the PII hierarchy to a particular case of the Kudryashov's hierarchy.





\section{Transformations for the mKdV-Liouville hierarchy}

Let us consider the hierarchy (\ref{mKdV-SG}) with $\delta(t)=0$ and $\epsilon_0(t)\neq 0$.  Due to theorem \ref{escala} of appendix, we can choose $\epsilon_0(t)=1$ without lost of generality. So we consider the mKdV-Liouville hierarchy as
\begin{equation}\label{mKdV-Liouville}
{\partial\over\partial x}\biggl({\partial\over\partial x}+y_x\biggr){\cal L}_{n}\biggl[x;y_{xx}-{1\over2}y_x^2\biggr]+y_{xt}+\beta(t)e^y=0.
\end{equation}



In order to transform ${\cal L}_{n}\biggl[x;y_{xx}-{1\over2}y_x^2\biggr]$ in a M\"{o}bius invariant element, we do the transformation
\begin{equation}\label{transf sigma}
y=\ln[g(t)\sigma_x],
\end{equation}
where $g(t)$ is a arbitrary function.  Thus, the hierarchy (\ref{mKdV-Liouville}) becomes
\begin{equation}\label{hierarchy sigma}
 {\partial\over\partial x}\biggl({\partial\over\partial x}+{\sigma_{xx}\over\sigma_x}\biggr){\cal L}_{n}[x;S(\sigma)]+{d\over dt}\biggl({\sigma_{xx}\over\sigma_x}\biggr)+\beta(t)g(t)\sigma_x=0
\end{equation}
where $S(\sigma)$ is the Schwarzian derivative, i.e.
\[
S(\sigma)={d\over dx}\biggl({\sigma_{xx}\over\sigma_x}\biggr)-{1\over2}\biggl({\sigma_{xx}\over\sigma_x}\biggr)^2.
\]

By doing the M\"obius transformation $\sigma=-1/\phi$ and defining  $\tilde{y}=\ln[g(t)\phi_x]$, we obtain the relation
\begin{equation}\label{y tilde}
y=\ln\biggl[g(t){{\phi_x\over\phi^2}}\biggr]=\ln[g(t)\phi_x]-2\ln\phi \equiv \tilde{y}-2\ln\phi
\end{equation}
and the hierarchy (\ref{hierarchy sigma}) is transformed into
\begin{equation}
 {\partial\over\partial x}\biggl({\partial\over\partial x}+{\phi_{xx}\over\phi_x}-2{\phi_x\over\phi}\biggr){\cal L}_{n}[x;S(\phi)]+{d\over dt}\biggl({\phi_{xx}\over\phi_x}-2{\phi_x\over\phi}\biggr)+\beta(t)g(t){\phi_x\over\phi^2}=0,
\end{equation}
which it can be rewrite as
\begin{equation}\label{y and phi}
 {\partial\over\partial x}\biggl({\partial\over\partial x}+\tilde{y}_x-2{\phi_x\over\phi}\biggr){\cal L}_{n}\biggl[x;\tilde{y}_{xx}-{1\over2}\tilde{y}_x^2\biggr]+{d\over dt}\biggl(\tilde{y}_x-2{\phi_x\over\phi}\biggr)+\beta(t)g(t){\phi_x\over\phi^2}=0
\end{equation}

Two cases are considered  in the following subsections by using equation (\ref{y and phi}), i. e. the partial differential equation (\ref{mKdV-Liouville}) and a self-similarity reduction of it.



\subsection{From a soliton solution of the mKdV hierarchy to a mKdV-Liouville solution}

Observe that $g(t)$ only appears multiplying $\beta(t)$ in (\ref{y and phi}). As we can redefine $\beta(t)$ using theorem \ref{escala} of appendix, we can assume $g(t)=1$ without lost of generality.

If $\tilde{y}_x=2v$, such that $v$ satisfies the mKdV hierarchy, i.e.
\begin{equation}\label{mKdV hierarchy}
 {\partial\over\partial x}\biggl({\partial\over\partial x}+2v\biggr){\cal L}_{n}\biggl[x;2(v_x-v^2)\biggr]+2v_t=0,
\end{equation}
then expression (\ref{y and phi}) yields the following condition on $\phi$
\begin{equation}
 {\partial\over\partial x}\biggl(-2{\phi_x\over\phi}{\cal L}_{n}\biggl[x;\tilde{y}_{xx}-{1\over2}\tilde{y}_x^2\biggr]\biggr)+{d\over dt}\biggl(-2{\phi_x\over\phi}\biggr)+\beta(t){\phi_x\over\phi^2}=0
\end{equation}

By integrating the above equation in $x$, we have
\begin{equation}\label{aux}
2\phi_x{\cal L}_{n}\biggl[x;\tilde{y}_{xx}-{1\over2}\tilde{y}_x^2\biggr]+2\phi_t+\beta(t)+\alpha(t)\phi=0,
\end{equation}
where $\alpha(t)$ is an arbitrary function. Condition (\ref{aux}) is similar to one found in \cite{Weiss 2} for the mKdV equation, although it was not derived any explicit solutions there.

In order to check that the equation (\ref{aux}) is compatible with the assumption (\ref{mKdV hierarchy}), let us isolate ${\cal L}_{n}\biggl[x;\tilde{y}_{xx}-{1\over2}\tilde{y}_x^2\biggr]$, i. e.
\begin{equation}
{\cal L}_{n}\biggl[x;\tilde{y}_{xx}-{1\over2}\tilde{y}_x^2\biggr]=-{\phi_t\over\phi_x}-{(\beta(t)+\alpha(t)\phi)\over 2\phi_x}
\end{equation}

The above expression together with the definition $\tilde{y}=\ln[\phi_x]$ yields
\begin{equation}\label{aux 1}
{\partial\over\partial x}{\cal L}_{n}\biggl[x;\tilde{y}_{xx}-{1\over2}\tilde{y}_x^2\biggr]=-{\phi_{xt}\over\phi_x}+{\phi_{t}\phi_{xx}\over\phi_x^2}+{(\beta(t)+\alpha(t)\phi)\phi_{xx}\over 2\phi_x^2}-{\alpha(t)\over2}
\end{equation}
\begin{equation}\label{aux 2}
\tilde{y}_x{\cal L}_{n}\biggl[x;\tilde{y}_{xx}-{1\over2}\tilde{y}_x^2\biggr]=-{\phi_{t}\phi_{xx}\over\phi_x^2}-{(\beta(t)+\alpha(t)\phi)\phi_{xx}\over 2\phi_x^2}
\end{equation}

By summing (\ref{aux 1}) and (\ref{aux 2}), we have

\begin{equation}\label{int y tilde}
\biggl({\partial\over\partial x}+\tilde{y}_x\biggr){\cal L}_{n}\biggl[x;\tilde{y}_{xx}-{1\over2}\tilde{y}_x^2\biggr]+\tilde{y}_t+{\alpha(t)\over2}=0,
\end{equation}
which gives (\ref{mKdV hierarchy}) through a derivation and the transformation $\tilde{y}_x=2v$. Hence $\tilde{y}$ is given in terms of $v$ by
\begin{equation}\label{sol v}
\tilde{y}=2\int^x v(x',t) dx'+\Gamma(t)
\end{equation}
where $\Gamma(t)$ is an arbitrary function. If we consider a soliton solution for $v$, we must choose $\alpha(t)=-2\Gamma'(t)$. This can be verified by  substituting  (\ref{sol v}) in (\ref{mKdV hierarchy}). Thus, we can rewrite equation (\ref{aux}) as
%
%
\begin{equation}\label{general equation}
 {\partial\over\partial t}\biggl(e^{-\Gamma(t)}\phi\biggr) +{\beta(t)e^{-\Gamma(t)}\over2}=-e^{\tilde{y}-\Gamma(t)}{\cal L}_{n}\biggl[x;\tilde{y}_{xx}-{1\over2}\tilde{y}_x^2\biggr]
\end{equation}

As the right hand side of the equation (\ref{general equation}) is expressed in terms of known elements, we can integrate it in order to determine $\phi$, i.e.
\begin{equation}\label{phi}
\phi =-e^{\Gamma(t)}\biggl(\int^t_{-\infty} e^{\tilde{y}-\Gamma(t')}{\cal L}_{n}\biggl[x;\tilde{y}_{xx}-{1\over2}\tilde{y}_x^2\biggr]dt'+\int^t_{-\infty}{\beta(t')e^{-\Gamma(t')}\over2}dt' +\Phi_0(x)\biggr) 
\end{equation}
where $\Phi_0(x)$ must be determined for (\ref{phi}) to be consistent with the definition $\tilde{y}=\ln\phi_x$. Let us derive (\ref{phi}) in $x$ and use (\ref{int y tilde}) such that
\begin{equation}\label{phi x}
{\partial\phi\over\partial x}=-e^{\Gamma(t)}\biggl(\int^t_{-\infty} e^{\tilde{y}-\Gamma(t')}\biggl({\partial\over\partial x}+\tilde{y}_x\biggr){\cal L}_{n}\biggl[x;\tilde{y}_{xx}-{1\over2}\tilde{y}_x^2\biggr]dt'+\Phi_0'(x)\biggr) = e^{\tilde{y}}-e^{\Gamma(t)}\biggl(\lim_{t'\to-\infty}e^{\tilde{y}(x,t')-\Gamma(t')}+\Phi_0'(x)\biggr).
\end{equation}
By using the definition of $\tilde{y}$, the above expression yields the condition
\[
\Phi_0(x)=-\int^x\lim_{t'\to-\infty}e^{\tilde{y}(x',t')-\Gamma(t')}dx'+c_1,    \hspace{1 cm} c_1 \equiv\textnormal{constant}
\]

Observe that a soliton solution from the mKdV hierarchy yields $\Phi_0(x)=-x+c_1$. Let us see three examples that work for the whole mKdV-Liouville hierarchy:

{\it  Example 1)} The vacuum solution of the mKdV hierarchy, i. e. $v=0$ yields $\tilde{y}=\Gamma(t)$ and
\[
\phi=e^{\Gamma(t)}\biggl(-\int^t{\beta(t')e^{-\Gamma(t')}\over2}dt' +x-c_1  \biggr)
\]
\begin{equation}  \label{ex 1}
y=-2\ln\biggl(-\int^t{\beta(t')e^{-\Gamma(t')}\over2}dt' +x-c_1 \biggr)-\Gamma(t)
\end{equation}

{\it  Example 2)} The 1-soliton  solution of the mKdV hierarchy, i. e.
\[
v= {\partial\over\partial x}\ln\biggl({2-e^{\eta}\over2+e^{\eta}}\biggr)  \hspace{1.5 cm}  \eta=k x-k^{2n+1} t ,
\]
yields
\[
\phi=e^{\Gamma(t)}\biggl(-{4e^{\eta}\over k(2+e^{\eta})}-\int^t{\beta(t')e^{-\Gamma(t')}\over2}dt' +x-c_1  \biggr)
\]
\begin{equation}  \label{ex 2}
y=2\ln\biggl({2-e^{\eta}\over2+e^{\eta}}\biggr)-2\ln\biggl(-{4e^{\eta}\over k(2+e^{\eta})}-\int^t{\beta(t')e^{-\Gamma(t')}\over2}dt' +x-c_1 \biggr)-\Gamma(t)
\end{equation}

{\it  Example 3)} The 2-soliton solution of the mKdV hierarchy, i. e
\[
v= {\partial\over\partial x}\ln\pmatrix{{4+2(e^{\eta_1}+e^{\eta_2})+\biggl({k_1-k_2\over k_1+k_2}\biggr)^2 e^{\eta_1+\eta_2}\over 4-2(e^{\eta_1}+e^{\eta_2})+\biggl({k_1-k_2\over k_1+k_2}\biggr)^2 e^{\eta_1+\eta_2}} } ,       \hspace{1.5 cm}     \eta_j=k_j x-k_j^{2n+1} t    ,  \hspace{1 cm}    j=1,2
\]
yields
\begin{eqnarray*}
\phi &=& e^{\Gamma(t)}\Biggl(-{4(k_1+k_2)\over k_1k_2}\biggl({k_1^2 e^{\eta_2}(e^{\eta_1}-2)-2k_1k_2(e^{\eta_1}+e^{\eta_2}+e^{\eta_1+\eta_2})+k_2^2e^{\eta_1}(e^{\eta_2}-2) \over  k_1^2(e^{\eta_1}-2)(e^{\eta_2}-2)-2k_1k_2(2e^{\eta_1}+2e^{\eta_2}+e^{\eta_1+\eta_2}-4)+k_2^2(e^{\eta_1}-2)(e^{\eta_2}-2) }\biggr)  \\
 &-& \int^t{\beta(t')e^{-\Gamma(t')}\over2}dt' +x-c_1  \Biggr)
\end{eqnarray*}
\begin{eqnarray}
y &=& 2\ln\pmatrix{{4+2(e^{\eta_1}+e^{\eta_2})+\biggl({k_1-k_2\over k_1+k_2}\biggr)^2 e^{\eta_1+\eta_2}\over 4-2(e^{\eta_1}+e^{\eta_2})+\biggl({k_1-k_2\over k_1+k_2}\biggr)^2 e^{\eta_1+\eta_2}} }   \nonumber \\
 & - & 2\ln\Biggl(-{4(k_1+k_2)\over k_1k_2}\biggl({k_1^2 e^{\eta_2}(e^{\eta_1}-2)-2k_1k_2(e^{\eta_1}+e^{\eta_2}+e^{\eta_1+\eta_2})+k_2^2e^{\eta_1}(e^{\eta_2}-2) \over  k_1^2(e^{\eta_1}-2)(e^{\eta_2}-2)-2k_1k_2(2e^{\eta_1}+2e^{\eta_2}+e^{\eta_1+\eta_2}-4)+k_2^2(e^{\eta_1}-2)(e^{\eta_2}-2) }\biggr)  \nonumber \\
 & - & \int^t{\beta(t')e^{-\Gamma(t')}\over2}dt' +x-c_1  \Biggr)-\Gamma(t)   \label{ex 3}
\end{eqnarray}

If we choose $\beta(t)=0$, the results of this section represent a map from the mKdV hierarchy into itself. By defining the field $\tilde{v}=y_x/2$, we have solutions for the mKdV hierarchy which appears to be new to the author knowledge.  Below, we show the solutions for the mKdV equation which can be obtained from the examples we have just used:

\begin{eqnarray*}
\textnormal{Example 1)} & &\tilde{v}={1\over c_1-x}  \\
&& \\
\textnormal{Example 2)} & & \tilde{v}={k[4+4e^{\eta}(k(x-c_1)-2)-e^{2\eta}]\over 4k(c_1-x)+8e^{\eta}+e^{2\eta}(k(x-c_1)-4)} \\
&& \\
\textnormal{Example 3)} & & \tilde{v}=\biggl[-4+4e^{\eta_1}(k_1(x-c_1)-2)+4e^{\eta_2}(k_2(x-c_1)-2)+e^{2\eta_1}+e^{2\eta_2}  \\
&&                                                        +{8(k_1^4-k_1^2k_2^2+k_2^4)\over k_1k_2(k_1+k_2)^2}e^{\eta_1+\eta_2}-{(k_1^2-k_2^2)^2\over k_1k_2(k_1+k_2)^2}[k_1e^{\eta_1+2\eta_2}(k_1k_2(x-c_1)-4k_1-2k_2)  \\
&&                                                        +k_2e^{2\eta_1+\eta_2}(k_1k_2(x-c_1)-2k_1-4k_2)]-{(k_1-k_2)^4\over 4(k_1+k_2)^4}e^{2(\eta_1+\eta_2)}              \biggr]\biggl/\biggl[4(x-c_1) \\
&&                                                        +{8e^{\eta_1}\over k_1}+{8e^{\eta_2}\over k_2}+e^{2\eta_1}\biggl(c_1-x+{4\over k_1}\biggr)+e^{2\eta_2}\biggl(c_1-x+{4\over k_2}\biggr) \\
&&                                                        +{8k_1k_2(c_1-x)+2(k_1+k_2)\over (k_1+k_2)^2}e^{\eta_1+\eta_2}-{2(k_1-k_2)^2\over k_1(k_1+k_2)^2}e^{\eta_1+2\eta_2}  \\
&&                                                        -{2(k_1-k_2)^2\over k_2(k_1+k_2)^2}e^{2\eta_1+\eta_2}+{(k_1-k_2)^4(k_1k_2(x-c_1)-4(k_1+k_2))\over 4k_1k_2(k_1+k_2)^4}e^{2(\eta_1+\eta_2)}\biggr]
\end{eqnarray*}

\begin{figure}
        \centering
        \begin{subfigure}[hbt]{0.3\textwidth}
                \includegraphics[width=\textwidth]{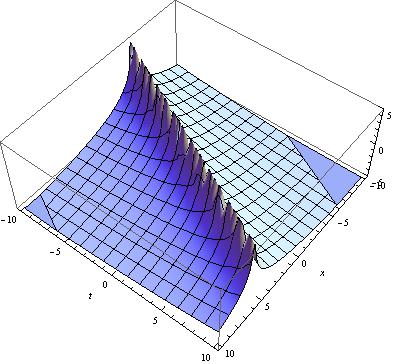}
                \caption{Example 1}
                \label{fig vacuo}
        \end{subfigure}%
        \begin{subfigure}[hbt]{0.3\textwidth}
                \includegraphics[width=\textwidth]{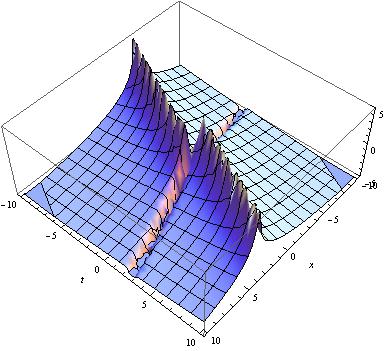}
                \caption{Example 2}
                \label{fig 1 sol}
        \end{subfigure}
        \begin{subfigure}[hbt]{0.3\textwidth}
                \includegraphics[width=\textwidth]{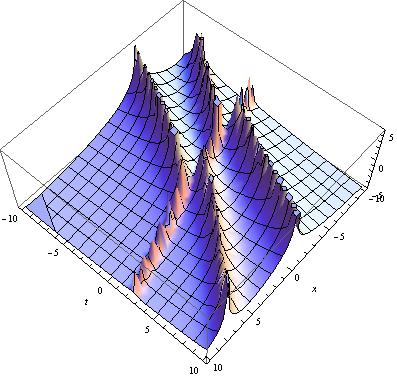}
                \caption{Example 3}
                \label{fig 2 sol}
        \end{subfigure}
        \caption{It was showed we can obtain solutions for the mKdV-Liouville hierarchy from the solutions of the mKdV hierarchy. In these figures, we illustrate some solutions for the first equation of the mKdV-Liouville hierarchy ($n=1$) with $\beta(t)=1$. Figure \ref{fig vacuo} shows solution (\ref{ex 1}) with $\Gamma(t)=0$ and $c_1=1$. Figure \ref{fig 1 sol} shows solution (\ref{ex 2}) with $\Gamma(t)=0$, $c_1=1$ and $k=-2$. Figure \ref{fig 2 sol} shows solution (\ref{ex 3}) with $\Gamma(t)=0$, $c_1=1$, $k_1=-1$ and $k_2=-2$.    }
\end{figure}


\subsection{Self-similarity reduction}

Let us consider the expression (\ref{mKdV hierarchy}) again. By theorem \ref{escala} of the appendix, we always can choose
\[
\beta(t)=\beta_0[(2n+1)t]^{-(2n+2)/(2n+1)}.
\]

Thus, by assuming the self-similarity reduction
\[
z=x[(2n+1)t]^{-1/(2n+1)} \hspace{0.7 cm} y(x,t)=y(z), \hspace{0.7 cm}   {\cal L}_{k}\biggl[x;y_{xx}-{1\over2}y_x^2\biggr]=[(2n+1)t]^{-2k/(2n+1)}{\cal L}_{k}\biggl[z;y_{zz}-{1\over2}y_z^2\biggr],
\]
the mKdV-Liouville hierarchy reduce to
\begin{equation}\label{mkdv-L in z}
{\partial\over\partial z}\biggl({\partial\over\partial z}+y_z\biggr){\cal L}_{n}\biggl[z;y_{zz}-{1\over2}y_z^2\biggr]-(zy_z)_z+\beta_0 e^{y}=0
\end{equation}

Hierarchy (\ref{mkdv-L in z}) is a particular case of the hierarchy proposed in \cite{kudryashov}. 
%
If we choose $g(t)=[(2n+1)t]^{1/(2n+1)}$, the definition (\ref{y tilde}) reduce to
\[
\tilde{y}(x,t)=\tilde{y}(z) , \hspace{1 cm}  \phi(x,t)=\phi(z)
\]

Thus, the self-similarity reduction of  (\ref{y and phi}) yields
\begin{equation}
 {\partial\over\partial z}\biggl({\partial\over\partial z}+\tilde{y}_z-2{\phi_z\over\phi}\biggr){\cal L}_{n}\biggl[z;\tilde{y}_{zz}-{1\over2}\tilde{y}_z^2\biggr]-{d\over dz}\biggl(z\tilde{y}_z-2z{\phi_z\over\phi}\biggr)+\beta_0{\phi_z\over\phi^2}=0
\end{equation}
which can be integrated and the integration constant defined as $2(\alpha-1)$, i. e.
\begin{equation}\label{eq z}
\biggl({\partial\over\partial z}+\tilde{y}_z-2{\phi_z\over\phi}\biggr){\cal L}_{n}\biggl[z;\tilde{y}_{zz}-{1\over2}\tilde{y}_z^2\biggr]-z\tilde{y}_z+2z{\phi_z\over\phi}-{\beta_0\over\phi}+2(\alpha-1)=0
\end{equation}

Let us assume that $\tilde{y}$ satisfies
\begin{equation}\label{PII in y}
\biggl({\partial\over\partial z}+\tilde{y}_z\biggr){\cal L}_{n}\biggl[z;\tilde{y}_{zz}-{1\over2}\tilde{y}_z^2\biggr]-z\tilde{y}_z-2\alpha=0,
\end{equation}
such that (\ref{eq z}) is simplified to
\begin{equation}\label{phi z}
-2{\phi_z\over\phi}{\cal L}_{n}\biggl[z;\tilde{y}_{zz}-{1\over2}\tilde{y}_z^2\biggr]+2z{\phi_z\over\phi}-{\beta_0\over\phi}+2(2\alpha-1)=0
\end{equation}

In order to check the compatibility between (\ref{PII in y}) and (\ref{phi z}), let us isolate ${\cal L}_{n}\biggl[z;\tilde{y}_{zz}-{1\over2}\tilde{y}_z^2\biggr]$, i.e.
\begin{equation}
{\cal L}_{n}\biggl[z;\tilde{y}_{zz}-{1\over2}\tilde{y}_z^2\biggr]=z-{\beta_0\over 2\phi_z}+(2\alpha-1){\phi \over\phi_z}
\end{equation}
such that
\[
{\partial\over\partial z}{\cal L}_{n}\biggl[z;\tilde{y}_{zz}-{1\over2}\tilde{y}_z^2\biggr]={\beta_0\phi_{zz}\over 2\phi_z^2}+2\alpha-(2\alpha-1){\phi\phi_{zz}\over \phi_z^2}
\]
\[
\tilde{y}_z{\cal L}_{n}\biggl[z;\tilde{y}_{zz}-{1\over2}\tilde{y}_z^2\biggr]=z\tilde{y}_z-{\beta_0\phi_{zz}\over 2\phi_z^2}+(2\alpha-1){\phi\phi_{zz}\over \phi_z^2}
\]

It is trivial to check that  (\ref{PII in y}) is true with the above expressions. The transformation
\[
\tilde{y}=2\int^zv(z')dz'+C
\]
maps the hierarchy  (\ref{PII in y}) into the the PII hierarchy, i. e.
\begin{equation}
\biggl({\partial\over\partial z}+2v\biggr){\cal L}_{n}\biggl[z;2(v_z-v^2)\biggr]-2zv-2\alpha=0,
\end{equation}
which has the PII equation as the first equation,  namely
\[
v_{zz}=2v^3+zv+\alpha.
\]

From (\ref{phi z}), we have
\[
\phi={e^{\tilde{y}}\biggl({\cal L}_{n}\biggl[z;\tilde{y}_{zz}-{1\over2}\tilde{y}_z^2\biggr]-z\biggr)+\beta_0/2\over (2\alpha-1)},
\]
provided that $\alpha\neq 1/2$. Therefore, the solution for (\ref{mkdv-L in z}) is related with the solution of the PII hierarchy by the transformation
\begin{equation}\label{sol z}
y=2\int^zv(z')dz'+C+2\ln(2\alpha-1)-2\ln\Biggl[e^{\tilde{y}}\biggl({\cal L}_{n}\biggl[z;2(v_z-v^2)\biggr]-z\biggr)+\beta_0/2\Biggr]
\end{equation}

Observe that the first equation of the hierarchy (\ref{mkdv-L in z}) is a forth order equation, i.e.
\begin{equation}\label{kudryashov eq}
y_{zzzz}-{3\over2}y_z^2y_{zz}-zy_{zz}-y_z+\beta_0 e^y=0,
\end{equation}
which is a particular case of the equation proposed in \cite{kudryashov} as a possibility to define a new transcendental function.

The general solution of the PII equation define a transcendental solution with two arbitrary constants plus the parameter $\alpha$. The solution (\ref{sol z}) for $n=1$ has the arbitrary constants $C$, $\alpha$ and two constants of integration from the general solution of the PII equation. Thus, solution (\ref{sol z}) represent the general solution for the equation (\ref{kudryashov eq}), which is a particular case of the Kudryashov's equation.


As examples,  we show some solutions for equation (\ref{kudryashov eq}) based on rational solutions for the PII equation in table \ref{rational}.

\begin{table}[htp]\caption{Solutions for equation (\ref{kudryashov eq}) based on rational solutions from PII equation \label{rational}}
\begin{center}
\begin{tabular}{|l|l|l|}
  \hline
  $\alpha=-2$  &   $v={2(z^3-2)\over z(z^3+4)}$  &  $y=C-2\ln\biggl({2e^C(z^6+20z^3-80)-\beta_0 z\over 10(z^3+4)}\biggr)$ \\
&  &  \\
  \hline
  $\alpha=-1$  &   $v={1\over z}$  &  $y=C-2\ln\biggl({2e^C(z^3+4)-\beta\over 6z}\biggr)$ \\
&  &  \\
  \hline
  $\alpha=0$  &   $v=0$  &  $y=C-2\ln\biggl(e^C z-{\beta_0\over 2}\biggr)$ \\
&  &  \\
  \hline
  $\alpha=1$  &   $v=-{1\over z}$  &  $y=C-2\ln\biggl({\beta_0 z\over 2}-e^C\biggr)$ \\
&  &  \\
  \hline
  $\alpha=2$  &   $v=-{2(z^3-2)\over z(z^3+4)}$  &  $y=C-2\ln\biggl({\beta_0 (z^3+4)-2e^C\over 6z}\biggr)$ \\
&  &  \\
  \hline
\end{tabular}
\end{center}
\end{table}

\section{Conclusion}

In this paper, it was shown the relation between the mKdV and mKdV-Liouville hierarchies by using an extension of the modified truncation approach. Some solutions for the mKdV-Liouville was presented, such that a particular case of these solutions yields new solutions for the mKdV hierarchy. Also, it was showed the general solution for a particular case of the Kudryashov's equation in terms of the second Painlev\'e transcendent.



\section*{Acknowledgements}
I am thankful to J. F. Gomes and A. H. Zimerman for discussions. The author thanks  FAPESP (2010/18110-9) for financial support.


\appendix

\section{Properties of the generalized mKdV-sinh-Gordon hierarchy}

Consider the generalized mKdV-sinh-Gordon hierarchy
\begin{equation}\label{mKdV-SG Ap}
E_n(y;\epsilon_0(t),\beta(t),\delta(t)):  \hspace{0.5 cm}    \epsilon_0(t){\partial\over\partial x}\biggl({\partial\over\partial x}+y_x\biggr){\cal L}_{n}\biggl[x;y_{xx}-{1\over2}y_x^2\biggr]+y_{xt}+\beta(t)e^y+\delta(t)e^{-y}=0
\end{equation}
where ${\cal L}_{n}[x;u]$ is the Lenard recurrence relation, i.e.
\[
{\partial\over\partial x}{\cal L}_{n+1}[x;u]=\biggl({\partial^3\over\partial x^3}+2u{\partial\over\partial x}+u_x\biggr){\cal L}_{n}[x;u],  \hspace{1 cm}   {\cal L}_{0}[x;u]=1.
\]

In this appendix we show that the generalized mKdV-sinh-Gordon hierarchy can be reduced in two simpler cases, namely, the mKdV-sinh-Gordon hierarchy or the mKdV-Liouville hierarchy. In order to show this, let us divide ${\cal L}_{n}[x;y_{xx}-{1\over2}y_x^2]$ in two parts as
\begin{equation}\label{odd-even}
{\cal L}_{n}\biggl[x;y_{xx}-{1\over2}y_x^2\biggr]={\cal L}_{n}^{(e)}\biggl[x;y_{xx}-{1\over2}y_x^2\biggr]+{\cal L}_{n}^{(o)}\biggl[x;y_{xx}-{1\over2}y_x^2\biggr].
\end{equation}
where we define ${\cal L}_{n}^{(e)}[x;y_{xx}-{1\over2}y_x^2]$ and ${\cal L}_{n}^{(o)}[x;y_{xx}-{1\over2}y_x^2]$ as the parts of ${\cal L}_{n}[x;y_{xx}-{1\over2}y_x^2]$ with even and odd dimension of the field respectively. For example
\[
{\cal L}_{1}\biggl[x;y_{xx}-{1\over2}y_x^2\biggr]=y_{xx}-{1\over2}y_x^2  \hspace{0.5 cm}  \Rightarrow   \hspace{0.5 cm} {\cal L}_{1}^{(e)}\biggl[x;y_{xx}-{1\over2}y_x^2\biggr]=-{1\over2}y_x^2   ,   \hspace{0.5 cm} {\cal L}_{1}^{(o)}\biggl[x;y_{xx}-{1\over2}y_x^2\biggr]=y_{xx} .
\]

From the definition of equation (\ref{odd-even}), observe that
\[
{\cal L}_{n}^{(e)}\biggl[x;y_{xx}-{1\over2}y_x^2\biggr]={\cal L}_{n}^{(e)}\biggl[x;-y_{xx}-{1\over2}y_x^2\biggr]
\]
\[
{\cal L}_{n}^{(o)}\biggl[x;y_{xx}-{1\over2}y_x^2\biggr]=-{\cal L}_{n}^{(o)}\biggl[x;-y_{xx}-{1\over2}y_x^2\biggr]
\]

Hence, the Lenard recurrence relation is equivalent to the following system
\begin{equation}\label{Lenard e}
{\partial\over\partial x}{\cal L}_{n+1}^{(e)}\biggl[x;y_{xx}-{1\over2}y_x^2\biggr]=\biggl({\partial^3\over\partial x^3}-y_x^2{\partial\over\partial x}-{1\over2}(y_x^2)_x\biggr){\cal L}_{n}^{(e)}\biggl[x;y_{xx}-{1\over2}y_x^2\biggr]+\biggl(2y_{xx}{\partial\over\partial x}+y_{xxx}\biggr){\cal L}_{n}^{(o)}\biggl[x;y_{xx}-{1\over2}y_x^2\biggr]
\end{equation}

\begin{equation}\label{Lenard o}
{\partial\over\partial x}{\cal L}_{n+1}^{(o)}\biggl[x;y_{xx}-{1\over2}y_x^2\biggr]=\biggl({\partial^3\over\partial x^3}-y_x^2{\partial\over\partial x}-{1\over2}(y_x^2)_x\biggr){\cal L}_{n}^{(o)}\biggl[x;y_{xx}-{1\over2}y_x^2\biggr]+\biggl(2y_{xx}{\partial\over\partial x}+y_{xxx}\biggr){\cal L}_{n}^{(e)}\biggl[x;y_{xx}-{1\over2}y_x^2\biggr]
\end{equation}

With the above properties, we can proof the following auto-B\"{a}cklund transformation.

\begin{theorem}\label{invertion}
Let $y=y(x,t)$ be the solution of $E_n(y;\epsilon_0(t),\beta(t),\delta(t))$, then $\tilde{y}=-y$ is the solution of $E_n(\tilde{y};\epsilon_0(t),-\delta(t),-\beta(t))$.
\end{theorem}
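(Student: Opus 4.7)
The plan is to substitute $\tilde{y}=-y$ directly and exploit the parity decomposition~(\ref{odd-even}). Writing $L_e:=\mathcal{L}_n^{(e)}[x;y_{xx}-\tfrac12 y_x^2]$ and $L_o:=\mathcal{L}_n^{(o)}[x;y_{xx}-\tfrac12 y_x^2]$, the relation $\tilde{y}_{xx}-\tfrac12\tilde{y}_x^2=-y_{xx}-\tfrac12 y_x^2$ together with the two parity identities recorded just before the statement yields $\mathcal{L}_n[x;\tilde{y}_{xx}-\tfrac12\tilde{y}_x^2]=L_e-L_o$. Combined with $\tilde{y}_x=-y_x$, $\tilde{y}_{xt}=-y_{xt}$, $e^{\tilde{y}}=e^{-y}$ and $e^{-\tilde{y}}=e^{y}$, forming the sum $E_n(\tilde{y};\epsilon_0,-\delta,-\beta)+E_n(y;\epsilon_0,\beta,\delta)$ makes the $y_{xt}$ term and both exponential pairs cancel, and the remaining differential part collapses to
\[
(\partial_x+y_x)(L_e+L_o)+(\partial_x-y_x)(L_e-L_o)=2\bigl[\partial_x L_e+y_x L_o\bigr].
\]
Since $E_n(y;\epsilon_0,\beta,\delta)=0$ by hypothesis, the theorem reduces to the single parity identity
\[
(\star)\qquad \partial_x L_e+y_x L_o=0.
\]

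To prove $(\star)$ I would argue by induction on $n$, using the split Lenard recurrences (\ref{Lenard e}) and (\ref{Lenard o}). The cases $n=0$ ($L_e=1$, $L_o=0$) and $n=1$ ($L_e=-\tfrac12 y_x^2$, $L_o=y_{xx}$, giving $-y_x y_{xx}+y_x y_{xx}=0$) are immediate. For the inductive step, assume $\partial_x L_e^{(n)}=-y_x L_o^{(n)}$ and substitute this identity (together with its two $x$-derivatives) into the right-hand side of~(\ref{Lenard e}) so as to eliminate every derivative of $L_e^{(n)}$; after cancellations one obtains
\[
\partial_x L_e^{(n+1)}=y_x\bigl[-\partial_x^2 L_o^{(n)}+y_x^2 L_o^{(n)}-y_{xx}L_e^{(n)}\bigr].
\]
Thus $(\star)$ at level $n+1$ is equivalent to the closed-form expression $L_o^{(n+1)}=\partial_x^2 L_o^{(n)}-y_x^2 L_o^{(n)}+y_{xx} L_e^{(n)}$. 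To verify this, differentiate its right-hand side and compare with~(\ref{Lenard o}) for $\partial_x L_o^{(n+1)}$, once again using the inductive hypothesis to replace $\partial_x L_e^{(n)}$ by $-y_x L_o^{(n)}$; the two expressions for $\partial_x L_o^{(n+1)}$ coincide. The two sides therefore differ by at most a function of $t$, which must vanish by the standard convention that the Lenard recurrence fixes each $\mathcal{L}_n$ as a polynomial in $y$-derivatives with no additive $t$-function, closing the induction.

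The only non-trivial step is the inductive computation just described. It is not conceptually deep, but it demands careful bookkeeping of terms graded by their parity under $y\mapsto -y$, and repeated use of the hypothesis $\partial_x L_e^{(n)}=-y_x L_o^{(n)}$ and of its first two $x$-derivatives to force all the required cancellations.
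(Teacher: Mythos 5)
Your proposal is correct and follows essentially the same route as the paper: substituting $\tilde{y}=-y$ and using the parity decomposition reduces the theorem to the identity $\partial_x\mathcal{L}_n^{(e)}+y_x\mathcal{L}_n^{(o)}=0$ (the paper's Eq.~(\ref{cond})), which both you and the paper then prove by induction on $n$ using the split Lenard recurrences (\ref{Lenard e})--(\ref{Lenard o}) with the same base cases. The only cosmetic difference is in the inductive step: the paper writes $\mathcal{L}_k^{(o)}$ as an indefinite integral, so the residual term $\int^x y_{xx}\bigl(\partial_x\mathcal{L}_{k-1}^{(e)}+y_x\mathcal{L}_{k-1}^{(o)}\bigr)dx$ appears explicitly and is killed by the inductive hypothesis together with its twice-differentiated form (\ref{auxiliar}), whereas you verify the closed-form expression for $\mathcal{L}_{n+1}^{(o)}$ by differentiating and comparing, disposing of the integration constant via the standard normalization of the Lenard recursion.
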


\begin{proof}
Consider the hierarchy
\begin{equation}
E_n(\tilde{y};\tilde{\epsilon}_0(t),\tilde{\beta}(t),\tilde{\delta}(t)):  \hspace{0.5 cm}    \tilde{\epsilon}_0(t){\partial\over\partial x}\biggl({\partial\over\partial x}+\tilde{y}_x\biggr){\cal L}_{n}\biggl[x;\tilde{y}_{xx}-{1\over2}\tilde{y}_x^2\biggr]+\tilde{y}_{xt}+\tilde{\beta}(t)e^{\tilde{y}}+\tilde{\delta}(t)e^{-\tilde{y}}=0
\end{equation}

By the transformation $\tilde{y}=-y$, we have
\[
\tilde{\epsilon}_0(t){\partial\over\partial x}\biggl({\partial\over\partial x}-y_x\biggr){\cal L}_{n}\biggl[x;-y_{xx}-{1\over2}y_x^2\biggr]-y_{xt}+\tilde{\beta}(t)e^{-y}+\tilde{\delta}(t)e^{y}=0
\]

Let us assume $\tilde{\epsilon}_0(t)=\epsilon_0(t)\neq 0$, $\tilde{\beta}(t)=-\delta(t)$ and $\tilde{\delta}(t)=-\beta(t)$. In order to proof that $y$ satisfy (\ref{mKdV-SG}), we need to proof that
\begin{equation}\label{proof1}
\biggl({\partial\over\partial x}-y_x\biggr){\cal L}_{n}\biggl[x;-y_{xx}-{1\over2}y_x^2\biggr]=-\biggl({\partial\over\partial x}+y_x\biggr){\cal L}_{n}\biggl[x;y_{xx}-{1\over2}y_x^2\biggr]
\end{equation}

Observe we can rewrite equation (\ref{proof1}) as
\begin{equation}\label{cond}
{\partial\over\partial x}{\cal L}_{n}^{(e)}\biggl[x;y_{xx}-{1\over2}y_x^2\biggr]+y_x{\cal L}_{n}^{(o)}\biggl[x;y_{xx}-{1\over2}y_x^2\biggr]=0
\end{equation}

By deriving twice the above equation, we have
\begin{equation}\label{auxiliar}
{\partial^3\over\partial x^3}{\cal L}_{n}^{(e)}\biggl[x;y_{xx}-{1\over2}y_x^2\biggr]+y_{xxx}{\cal L}_{n}^{(o)}\biggl[x;y_{xx}-{1\over2}y_x^2\biggr]+2y_{xx}{\partial\over\partial x}{\cal L}_{n}^{(o)}\biggl[x;y_{xx}-{1\over2}y_x^2\biggr]+y_{x}{\partial^2\over\partial x^2}{\cal L}_{n}^{(o)}\biggl[x;y_{xx}-{1\over2}y_x^2\biggr]=0
\end{equation}

Expression (\ref{cond}) is verified by induction. It is ease to verify that (\ref{cond}) is true for $n=0$ and $n=1$, i.e.
\[
{\cal L}_{0}\biggl[x;y_{xx}-{1\over2}y_x^2\biggr]=1  \hspace{0.5 cm}  \Rightarrow   \hspace{0.5 cm} {\cal L}_{0}^{(e)}\biggl[x;y_{xx}-{1\over2}y_x^2\biggr]=1   ,   \hspace{0.5 cm} {\cal L}_{0}^{(o)}\biggl[x;y_{xx}-{1\over2}y_x^2\biggr]=0
\]
\[
{\cal L}_{1}\biggl[x;y_{xx}-{1\over2}y_x^2\biggr]=y_{xx}-{1\over2}y_x^2  \hspace{0.5 cm}  \Rightarrow   \hspace{0.5 cm} {\cal L}_{1}^{(e)}\biggl[x;y_{xx}-{1\over2}y_x^2\biggr]=-{1\over2}y_x^2   ,   \hspace{0.5 cm} {\cal L}_{1}^{(o)}\biggl[x;y_{xx}-{1\over2}y_x^2\biggr]=y_{xx} .
\]

Now, let us assume that expression (\ref{cond}) is true for $n=k-1$. Using (\ref{Lenard e}) and (\ref{Lenard o}) in expression  (\ref{cond}) with $n=k$ yields
\begin{eqnarray*}
& &  \biggl({\partial^3\over\partial x^3}-y_x^2{\partial\over\partial x}-{1\over2}(y_x^2)_x\biggr){\cal L}_{k-1}^{(e)}\biggl[x;y_{xx}-{1\over2}y_x^2\biggr]+\biggl(2y_{xx}{\partial\over\partial x}+y_{xxx}\biggr){\cal L}_{k-1}^{(o)}\biggl[x;y_{xx}-{1\over2}y_x^2\biggr]+ \\
& &  y_x\biggl\{\biggl({\partial^2\over\partial x^2}-y_x^2\biggr){\cal L}_{k-1}^{(o)}\biggl[x;y_{xx}-{1\over2}y_x^2\biggr]+y_{xx}{\cal L}_{k-1}^{(e)}\biggl[x;y_{xx}-{1\over2}y_x^2\biggr]+ \\
& & \int^x y_{xx}\biggl({\partial\over\partial x}{\cal L}_{k-1}^{(e)}\biggl[x;y_{xx}-{1\over2}y_x^2\biggr]+y_x{\cal L}_{k-1}^{(o)}\biggl[x;y_{xx}-{1\over2}y_x^2\biggr]\biggr)dx \biggr\}=0
\end{eqnarray*}

Therefore, by using (\ref{cond}) and  (\ref{auxiliar}) with $n=k-1$, the above equation is verified.

\end{proof}

\begin{theorem}\label{escala}
Let $y=y(x,t)$ be the solution of $E_n(y;\epsilon_0(t),\beta(t),\delta(t))$, then the transformation
\[
\tilde{y}=y(x,\tilde{t})+\ln f(\tilde{t}) ,  \hspace{1 cm}  \tilde{t}=\int^t{dt'\over \epsilon_0(t')}
\]
gives the solution for $E_n\biggl(\tilde{y};1,{\beta(t)f(\tilde{t})\over\epsilon_0(t)},{\delta(t)\over f(\tilde{t})\epsilon_0(t)}\biggr)$, with $t=t(\tilde{t})$, provided that $ \epsilon_0(t) \neq 0$ and $f(\tilde{t}) \neq 0$.

\end{theorem}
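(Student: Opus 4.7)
The plan is a direct computation: substitute the transformation into the new equation $E_n(\tilde{y};1,\tilde{\beta},\tilde{\delta})$ and check that it collapses to a nonzero scalar multiple of the original $E_n(y;\epsilon_0,\beta,\delta)=0$, which holds by hypothesis. Because the transformation is made of only two ingredients, a time-dependent additive shift $\ln f(\tilde{t})$ and a time reparameterization $t\mapsto\tilde{t}$, the proof reduces to applying the chain rule in the right places and keeping careful track of the $\epsilon_0$ and $f$ factors that appear.

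I would split the verification into the three types of terms that appear in $E_n$. For the spatial block, the fact that $\ln f(\tilde{t})$ carries no $x$-dependence gives $\tilde{y}_x=y_x$ and hence $\tilde{y}_{xx}-{1\over 2}\tilde{y}_x^2=y_{xx}-{1\over 2}y_x^2$, so the Lenard polynomial ${\cal L}_n[x;\cdot]$ and the whole operator ${\partial\over\partial x}({\partial\over\partial x}+\tilde{y}_x){\cal L}_n[\cdots]$ coincide with the corresponding expressions for $y$ at the matched point. For the mixed derivative, $\tilde{t}=\int^t dt'/\epsilon_0(t')$ gives $dt/d\tilde{t}=\epsilon_0(t)$; the chain rule then yields $\tilde{y}_{\tilde{t}}=\epsilon_0(t)\,y_t+f'(\tilde{t})/f(\tilde{t})$, and a single ${\partial/\partial x}$ erases the $x$-independent piece, leaving $\tilde{y}_{x\tilde{t}}=\epsilon_0(t)\,y_{xt}$. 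For the exponentials, the gauge shift gives $e^{\tilde{y}}=f(\tilde{t})\,e^y$ and $e^{-\tilde{y}}=e^{-y}/f(\tilde{t})$.

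Finally, inserting these three pieces into the left-hand side of $E_n(\tilde{y};1,\tilde{\beta},\tilde{\delta})$ with $\tilde{\beta}=\beta(t)f(\tilde{t})/\epsilon_0(t)$ and $\tilde{\delta}=\delta(t)/(f(\tilde{t})\epsilon_0(t))$ should factor out a common $1/\epsilon_0(t)$ and leave exactly the original equation inside the bracket, which vanishes by hypothesis. The assumptions $\epsilon_0\neq 0$ and $f\neq 0$ are precisely what is required both to invert $\epsilon_0$ and $f$ in the coefficients and to make the change of variable $t\leftrightarrow\tilde{t}$ a genuine diffeomorphism. The only real obstacle is bookkeeping: tracking the powers of $\epsilon_0$ and $f$ coming from the chain rule and from $e^{\pm\tilde{y}}=f^{\pm1}e^{\pm y}$ so that they combine with the prescribed $\tilde{\beta},\tilde{\delta}$ to reproduce the original equation exactly, without leaving a residual factor of $\epsilon_0^k$ or $f^j$.
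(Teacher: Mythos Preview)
Your approach is exactly the paper's: the paper's entire proof is the single sentence ``It is a direct substitution,'' and your proposal is precisely a spelled-out version of that substitution, splitting the computation into the spatial block (unchanged because $\ln f$ has no $x$-dependence), the mixed derivative (handled by the chain rule for $t\leftrightarrow\tilde t$), and the exponential terms (handled by $e^{\pm\tilde y}=f^{\pm1}e^{\pm y}$). One word of caution on the final bookkeeping step: with $d\tilde t/dt=1/\epsilon_0$ you get $\tilde y_{x\tilde t}=\epsilon_0\,y_{xt}$, while the spatial block carries no $\epsilon_0$, so the claimed uniform factor of $1/\epsilon_0$ does not appear term-by-term as written---make sure when you actually carry it out that the direction of the time change and the sign on $\ln f$ are consistent with the stated $\tilde\beta,\tilde\delta$ (the theorem's notation is loose on this point), since that is exactly the ``residual factor of $\epsilon_0^k$ or $f^j$'' pitfall you yourself flagged.
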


\begin{proof}
It is a direct substitution.
\end{proof}

Using theorem \ref{escala} with $f(\tilde{t})=\pm i\sqrt{\delta(t)\over\beta(t)}$, we can reduce the hierarchy (\ref{mKdV-SG Ap}), with $\epsilon_0(t)$, $\beta(t)$ and $\delta(t)$ non-null, to the usual mKdV-sinh-Gordon hierarchy
\begin{equation}\label{mKdV-SG usual}
{\partial\over\partial x}\biggl({\partial\over\partial x}+\tilde{y}_x\biggr){\cal L}_{n}\biggl[x;\tilde{y}_{xx}-{1\over2}\tilde{y}_x^2\biggr]+\tilde{y}_{x\tilde{t}}+\eta(\tilde{t})\sinh \tilde{y}=0 , \hspace{1 cm}  \eta(\tilde{t})=\pm{2 i\sqrt{\delta(t)\beta(t)}\over\epsilon_0(t) }
\end{equation}

The  hierarchy (\ref{mKdV-SG Ap}) with  $\delta(t)=0$ is the mixed mKdV-Liouville hierarchy and, by theorem \ref{escala},  can be reduced to  
\begin{equation}\label{mKdV-L}
{\partial\over\partial x}\biggl({\partial\over\partial x}+\tilde{y}_x\biggr){\cal L}_{n}\biggl[x;\tilde{y}_{xx}-{1\over2}\tilde{y}_x^2\biggr]+\tilde{y}_{x\tilde{t}}+\tilde{\beta}(\tilde{t}) e^{\tilde{y}}=0 , \hspace{1 cm}  \tilde{\beta}(\tilde{t})={\beta(t)f(\tilde{t})\over\epsilon_0(t) }
\end{equation}

Observe that the case $\beta(t)=0$ and $\delta(t) \neq 0$ can be mapped on hierarchy (\ref{mKdV-L}) by theorem \ref{invertion}.




\bibliographystyle{model1a-num-names}
\bibliography{<your-bib-database>}



\end{document}